\title{An Aspect of Optimal Regression Design for LSMC}
\author{Christian Wei\ss{} \and Zoran Nikoli\'{c}}
\date{\today}
\newtheorem{thm}{Theorem}[section]
\newtheorem{lem}[thm]{Lemma}
\newcommand{\RR}{{\mathbb{R}}}
\newcommand{\NN}{{\mathbb{N}}}
\begin{document} 

\maketitle

\begin{abstract}
	Practitioners sometimes suggest to use a combination of Sobol sequences and orthonormal polynomials when applying an LSMC algorithm for evaluation of option prices or in the context of risk capital calculation under the Solvency II regime. In this paper, we give a theoretical justification why good implementations of an LSMC algorithm should indeed combine these two features in order to assure numerical stability. Moreover, an explicit bound for the number of outer scenarios necessary to guarantee a prescribed degree of numerical stability is derived. We embed our observations into a coherent presentation of the theoretical background of LSMC in the insurance setting.
\end{abstract}

\section{Introduction}

Least Squares Monte Carlo (LSMC) methods were originally introduced as an alternative to classical Monte Carlo approaches when calculating the price of an American or Bermuda style option, for which no closed-form solutions exist, compare e.g. \cite{Car96}, \cite{LS01}. In recent years LSMC has gained a lot of attention also in the insurance business, where approximation algorithms are needed to calculate the capital requirements under the Solvency II regime, see e.g. \cite{BBR10}, \cite{LHKB13}, \cite{BFW14}, \cite{KNK18}. The reason for the necessity of approximation is that a full nested stochastic calculation of the capital requirement would cause run times which as of today by far exceed the computing capacities of insurance companies.\\[12pt]
A theoretical justification for applying an LSMC approach in the insurance context was given by \cite{BH15}, who, using a result of Newey \cite{New97}, formally proved  convergence of the algorithm for the risk distribution and for certain families of risk measures.\footnote{Actually, they also discuss that the estimator for the Value-at-Risk, which is demanded under the Solvency II regime, is biased, but on the conservative side, compare also \cite{GJ10}.} Convergence under less restrictive assumptions than those from \cite{New97} was proved in \cite{Ben17}. It is stated in \cite{KNK18} that convergence also holds in the slightly different setting closer to the actual implementations on the market in contrast to \cite{BH15} which we are going to present here.\\[12pt] 
Let us shortly describe how the LSMC algorithm works: As a first step, risk drivers $Z_1,\ldots,Z_s$ which are relevant for the insurance company are identified, among them market and underwriting risks. In the practical implementations typically each risk driver is confined to a compact range, e.g. given by the $0.1$ to the $99.9$ percentile of the real-world distribution of this risk driver. Thus, we may without loss of generalization assume that $(Z_1,\ldots,Z_s) \in [0,1]^s$ after scaling. Next, a fitting space is constructed by deterministically choosing many (usually several thousand) realizations of $Z(\omega_i):=(Z_1(\omega_i),\ldots,Z_s(\omega_i))$. Often, Sobol sequences, a special type of low-discrepancy sequences, are chosen at this step to uniformly fill $[0,1]^s$. These so-called \textbf{outer scenarios} are fed into the cashflow projection model of the insurance company and the best estimate liabilities (BEL) is evaluated for a small number (e.g. 1 or 2) of so-called \textbf{inner scenarios}, i.e. Monte-Carlo simulations under the risk-neutral measure conditioned on the risk driver realization (outer scenario) under consideration. Afterwards a regression is performed: the BEL values are taken as response $y$. The design matrix $X$ consists of the basis functions $\varphi_j$ evaluated at the risk driver realizations $Z(\omega_i)$, that is $X_{ij} = \varphi_j(Z(\omega_i))$. The regression problem thus takes the form
\begin{align} \label{eq:regression}
y = X\beta + \epsilon,
\end{align}
where the parameter vector $\beta$ needs to be estimated and $\epsilon$ denotes the error term. Usually, a least squares estimation for $\beta$ is applied, however alternative types of regression were shown to be efficient as well, see \cite{NJZ17}. Finally, the quality of the approximation is assessed using an out-of-sample validation. For more details on the LSMC algorithm we refer the reader to \cite{BFW14} and \cite{KNK18}.\\[12pt]
In this paper we are interested in the following observation: At first sight, it seems odd to fill the whole space $[0,1]^s$ uniformly by a Sobol sequence: if the risk drivers were uncorrelated the \textit{corners} of the unit cube correspond to the $0.01^s$ respectively $(1-0.01)^s$ percentile of the joint distribution.\footnote{In practice, Gauss copulas are widely used to model dependencies of the risk drivers. This results in a concentration inside the unit sphere.} However, we will argue here that for numerical reasons this is indeed the best way to perform the LSMC algorithm: numerical stability can be achieved optimally by combining a low-discrepancy sequence with a (subset of an) orthonormal polynomial basis of the Hilbert space $L^2([0,1]^s)$ as basis functions for the regression.

\section{Theoretical Background}

\paragraph{The numerical challenge.} While the most time consuming step in the LSMC calculation is the evaluation of the cashflow projection model, the numerically most challenging step is the regression. The $N$-dimensional estimator $\widehat{\beta}$ of the parameter vector in the linear regression \eqref{eq:regression} is given by
\begin{align} \label{eq:estimator}
\widehat{\beta} = (X^TX)^{-1}X^Ty = \left(\frac{1}{N} X^TX\right)^{-1} \cdot \frac{1}{N} X^Ty.
\end{align}
While matrix multiplication is numerically stable, the main problem lies here in the inversion of $X^TX$ because the number of columns of $X$ might be big (equal to the dimension $N$ of the space which the regression projects on). The matrix $X^TX$ might be ill-conditioned. This has led to the implementation of various regularization techniques, the most famous being probably ridge regression, see e.g. \cite{TGSY95} and \cite{NJZ17}. Our approach to the problem is to add the multiplicative factor $\frac{1}{N}$ twice: as we will prove in this paper, it will first stabilize the inversion of the matrix. Second the values of $y$ are in the context of LSMC only based on a small number of inner scenarios (as stated earlier $<10$) and hence they are very inaccurate. If different inner scenarios (Monte Carlo simulations under the risk-neutral measure) were evaluated the response $y$ would thus differ a lot. It would on the other hand be desirable that the estimation in \eqref{eq:estimator} yields a similar estimated parameter vector $\widehat{\beta}$ in either simulation. Therefore, it makes sense to add the factor $\tfrac{1}{N}$ which scales down the inaccuracies. 
\paragraph{Condition number.} Recall that the \textbf{condition number} $\kappa(A)$ of a matrix $A$ measures the numerical stability of a matrix, i.e. it gives a bound how inaccurate the solution of the linear equation $Ax=b$ is. It is defined as
$$\kappa(A) = \left\| A^{-1} \right\| \cdot \left\| A \right\|,$$
where $\left\| \cdot \right\|$ is the $l^2$-operator norm. If small changes in the input result in large changes in the output, then the matrix is called \textbf{ill-conditioned}, otherwise \textbf{well-conditioned}. Since the matrix $A=X^T X$ is a normal matrix we have 
\begin{align} \label{eq:kappa2}
 \kappa(A) = \frac{\lambda_{\max}(A)}{\lambda_{\min}(A)},
\end{align}
where $\lambda_{\max}(A)$ and $\lambda_{\min}(A)$ denote the largest and smallest eigenvalue of $A$. Often $\widetilde{\kappa} = \log(\kappa)$ is considered instead of $\kappa$ because it can be interpreted as the number of (last) digits which may be incorrect due to numerical instability of the regression problem. For more details we refer the reader e.g. to \cite{TB97}, Chapter III.
\paragraph{Gershgorin Circle Theorem} A main ingredient in our proof is Gershgorin's Theorem which gives a bound for the eigenvalues of a matrix. It was first proved in \cite{Ger31} and belongs to the clasical results of numerical analysis. We state it here for the sake of completeness.
\begin{thm}[Gershgorin Circle Theorem] \label{thm:Gershgorin} All eigenvalues of a matrix $A \in \mathbb{C}^{n \times n}$ lie within the Gershgorin discs
	$$D_j := \left\{ z \in \mathbb{C} \, | \, |z - a_{ii}| \leq \sum_{k=1, k \neq j}^n |a_{jk}| \right\}.$$
\end{thm}

\paragraph{Orthonormal polynomials.} The space of square-integrable functions $L^2([0,1]^s)$, becomes a Hilbert space when equipped with the inner product
$$\langle f,g \rangle := \int_{[0,1]^s} f(x)g(x) \mathrm{d} x.$$
The Hilbert norm of an element $f \in  L^2([0,1]^s)$ is given by $\left\| f \right\| = \sqrt{\langle f,f \rangle}.$ A subset $S \subset L^2([0,1]^s)$ is \textbf{orthogonal} if $\langle f,g\rangle = 0$ for every two elements $f,g \in S$. If in addition $\left\| f \right\| = 1$ holds for all $f \in S$, then $S$ is \textbf{orthonormal} and called a \textbf{(Hilbert) basis}. For a complete basis $S$, we can write every element $x \in L^2([0,1]^s)$ as
$$x = \sum_{u \in S} \langle x,u \rangle u,$$
i.e. every element can be arbitrarily well approximated by linear combinations of basis elements. Note that $L^2([0,1]^s)$ is a separable Hilbert space, so that complete bases are available. To explicitly construct a Hilbert basis e.g. the following steps can be applied: For dimension $s=1$ one ends up with Legendre-like polynomials $P_n(x)$ when starting with monomials and applying Gram-Schmidt algorithm.\footnote{There are also other and more involved examples like the Askey-Wilson polynomials introduced in \cite{AW85}.} For any one-dimensional orthonormal basis $p_1(x), p_2(x),\ldots$ of $L^2([0,1])$, an $s$-dimensional Hilbert basis of $L^2([0,1]^s)$ can be obtained as follows:
\begin{lem} Let $p_1(x), p_2(x),\ldots$ be an orthonormal basis of $L^2([0,1])$. Then the multi-dimensional elements
\begin{align} \label{eq:pols}
p_{i_1,i_2,\ldots,i_s}(x_1,\ldots,x_s) := p_{i_1}(x_1) \cdot p_{i_2}(x_2) \cdot \ldots \cdot p_{i_s}(x_s)
\end{align}
with $i_j \in \NN$ for $1 \leq j \leq s$ form a basis of $L^2([0,1]^s)$.
\end{lem}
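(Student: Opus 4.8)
The plan is to verify the two defining properties of a Hilbert basis separately: orthonormality of the family $\{p_{i_1,\ldots,i_s}\}$, which is an immediate consequence of Fubini's theorem, and completeness, which I would establish by a short induction on the dimension $s$.

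For orthonormality, observe that for two multi-indices $(i_1,\ldots,i_s)$ and $(j_1,\ldots,j_s)$ the integral defining $\langle p_{i_1,\ldots,i_s}, p_{j_1,\ldots,j_s}\rangle$ factorizes over the $s$ coordinates by Fubini's theorem, giving $\prod_{k=1}^s \langle p_{i_k}, p_{j_k}\rangle$ with the one-dimensional inner products taken in $L^2([0,1])$. Since the $p_i$ are orthonormal in $L^2([0,1])$, this product equals $1$ when the two multi-indices coincide and $0$ otherwise, so in particular $\left\| p_{i_1,\ldots,i_s}\right\| = 1$.

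For completeness it suffices to show that the only $f \in L^2([0,1]^s)$ orthogonal to every $p_{i_1,\ldots,i_s}$ is $f = 0$. I would argue by induction on $s$, the case $s = 1$ being exactly the hypothesis that $p_1,p_2,\ldots$ is a basis. For the inductive step, write $[0,1]^s = [0,1]^{s-1}\times[0,1]$ and, for each fixed $i_s \in \NN$, set $g_{i_s}(x_1,\ldots,x_{s-1}) := \int_0^1 f(x_1,\ldots,x_s)\, p_{i_s}(x_s)\,\mathrm{d}x_s$. By the Cauchy--Schwarz inequality and Fubini's theorem this $g_{i_s}$ lies in $L^2([0,1]^{s-1})$, and the assumed orthogonality relations translate into $\langle g_{i_s}, p_{i_1,\ldots,i_{s-1}}\rangle = 0$ for all $i_1,\ldots,i_{s-1}$. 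The induction hypothesis then forces $g_{i_s} = 0$ almost everywhere for every $i_s$; discarding a countable union of null sets, one finds that for almost every $(x_1,\ldots,x_{s-1})$ the slice $x_s \mapsto f(x_1,\ldots,x_s)$ is square-integrable and orthogonal to all $p_{i_s}$, hence vanishes almost everywhere by the $s=1$ case. A final application of Fubini's theorem yields $f = 0$ in $L^2([0,1]^s)$, which together with the density description of elements of a complete basis in the paragraph preceding the lemma finishes the argument.

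The genuinely delicate point is the measure-theoretic bookkeeping in the inductive step: one must ensure the slice functions are square-integrable for almost every slice, that the countably many a.e.\ statements (one per $i_s$) can be intersected into a single full-measure set, and that ``$f$ vanishes on almost every slice'' upgrades to ``$f$ vanishes a.e.\ on the product'' — each of these is a routine consequence of Fubini--Tonelli but deserves to be spelled out. Alternatively, and perhaps more cleanly, one can invoke the canonical isometric isomorphism $L^2([0,1]^s) \cong L^2([0,1])^{\otimes s}$ of Hilbert spaces together with the general fact that tensor products of orthonormal bases form an orthonormal basis of the Hilbert-space tensor product; I would mention this as a remark but favour the elementary Fubini argument for the sake of self-containedness.
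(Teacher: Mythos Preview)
Your orthonormality argument (Fubini factorisation of the inner product into a product of one-dimensional inner products) is exactly what the paper does, only slightly more efficiently: the paper treats the diagonal case $\langle p_{i_1,\ldots,i_s},p_{i_1,\ldots,i_s}\rangle=1$ and the off-diagonal case separately, whereas you handle both at once via $\prod_k \langle p_{i_k},p_{j_k}\rangle$.

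The genuine difference is that the paper stops there and does \emph{not} prove completeness at all. This is consistent with the somewhat nonstandard terminology set up just before the lemma, where ``(Hilbert) basis'' is defined simply as an orthonormal set, with ``complete basis'' reserved for a total one. Under that reading the lemma only asserts orthonormality, and the paper's proof matches the statement. Your induction on $s$ (and the tensor-product remark) therefore goes strictly beyond what the paper proves; it is correct and is what most readers would expect ``basis'' to mean, so it is a worthwhile addition rather than a deviation.
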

\begin{proof}
 By Fubini's Theorem and the property that the $p_i$ are a basis in $L^2([0,1])$ it follows that
 \begin{align*}
  \int_0^1 \int_0^1 & \ldots \int_0^1  (p_{i_1}(x_1) \cdot p_{i_2}(x_2) \cdot  \ldots \cdot p_{i_s}(x_s))^2 \mathrm{d}x_1 \mathrm{d}x_2 \ldots \mathrm{d}x_s\\
  & =  \int_0^1 \int_0^1 \ldots \int_0^1 p_{i_1}(x_1)^2 \mathrm{d}x_1 \cdot p_{i_2}(x_2)^2 \mathrm{d}x_2 \cdot \ldots \cdot p_{i_s}(x_s)^2   \ldots \mathrm{d}x_s\\
  & = 1 \cdot 1 \cdot \ldots \cdot 1 = 1.
 \end{align*}
 Similarly, for two polynomials $p_{i_1,i_2,\ldots,i_s}$ and $p_{j_1,j_2,\ldots,j_s}$ with $i_k \neq j_k$ for some $k$ we get
  \begin{align*}
 \int_0^1 & \ldots \int_0^1  p_{i_1}(x_1) \cdot \ldots \cdot p_{i_s}(x_s) p_{j_1}(x_1) \cdot \ldots \cdot p_{j_s}(x_s) \mathrm{d}x_1 \ldots \mathrm{d}x_s\\
 & =  \int_0^1 \int_0^1 \ldots \int_0^1 p_{i_k}(x_k) p_{j_k}(x_k) \mathrm{d}x_k p_{i_1}(x_1) p_{j_1}(x_1) \mathrm{d}x_1 \ldots \mathrm{d}x_s\\
 & = 0 \cdot \ldots = 0.
 \end{align*}
\end{proof}
\paragraph{Discrepancy.} Let $Z=(z_n)_{n \geq 0}$ be a sequence in $[0,1)^s$. Recall that the \textbf{star-discrepancy} of the first $N$ points of the sequence is defined by
$$D^*_N(Z) := \sup_{B \subset [0,1)^d} \left| \frac{A_N(B)}{N} - \lambda_s(B) \right|,$$
where the supremum is taken over all intervals $B = [0,a_1) \times [0,a_2) \times \ldots \times [0,a_s) \subset [0,1)^s$ and $A_N(B) :=  |\left\{ n \ \mid \ 0 \leq n < N, z_n \in B \right\}|$ and $\lambda_s$ denotes the $s$-dimensional Lebesgue-measure. If $D^*_N(Z)$ satisfies
\begin{align} \label{eq:diskr}
D^*_N(Z) = O(N^{-1}(\log N)^{s})
\end{align}
then $Z$ is called a \textbf{low-discrepancy sequence}. It is widely conjectured that the rate of convergence in \eqref{eq:diskr} is optimal. In fact, the conjecture is proven in the one- and two-dimensional case, \cite{Sch72}, and there is theoretical and computational evidence that it is also true for higher dimensions. In practical applications, explicit examples of low-discrepancy sequences are of course needed. Among them, Sobol sequences are the most commonly used class. Since their concrete construction is not of interest for us, we refer the reader to \cite{BF88}, \cite{Gla03}, \cite{Nie92} and for an algorithm which is currently regularly used in software implementations to \cite{JK08}.
\paragraph{Koksma-Hlawka inequality.} Quasi-Monte Carlo methods are often preferred to Monte Carlo ones due to a better rate of convergence and deterministic error bounds: for an unknown function $f: [0,1]^s \to \RR$ the speed of convergence of the finite sum $\frac{1}{n} \sum_{i=1}^n f(x_i)$ to the integral $\int_{[0,1]^s} f(x) \mathrm{d} x$ depends only on the bounded variation of the function and the star-discrepancy of the sequence. More precisely the following holds.
\begin{thm}[Koksma-Hlawka inequality] \label{thm:Koksma_Hlawka} Let $f: [0,1]^s \to \RR$ be an arbitrary function of bounded variation in the sense of Hardy and Krause, $V(f)$, and let $x_1,\ldots,x_N$ be a finite sequence of points in $[0,1]^s$. Then 
	$$\left| \frac{1}{N} \sum_{i=1}^N f(x_i) - \int_{[0,1]^s} f(x) \mathrm{d} x \right| \leq V(f) D_N^*(x_1,\ldots,x_N).$$
\end{thm}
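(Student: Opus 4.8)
The plan is to reduce everything to the one-dimensional case by an iterated integration by parts, so let me first describe the heart of the argument for $s=1$. Write $R_N(f) := \frac1N\sum_{i=1}^N f(x_i) - \int_0^1 f(x)\,\mathrm{d}x$ and introduce the local discrepancy function
$$g(u) := \frac{A_N([0,u))}{N} - u, \qquad u \in [0,1],$$
so that $g(0)=0$, $g(1^-)=0$, and $\sup_{u}|g(u)| = D_N^*(x_1,\ldots,x_N)$ directly from the definition of the star-discrepancy in dimension one. Assuming first that $f$ is continuously differentiable (the general bounded-variation case is handled at the end), Riemann--Stieltjes integration by parts gives
$$\int_0^1 g(u)\,\mathrm{d}f(u) = \bigl[g(u)f(u)\bigr]_0^1 - \int_0^1 f(u)\,\mathrm{d}g(u) = -\,R_N(f),$$
because $\mathrm{d}g$ is exactly the signed measure $\frac1N\sum_{i=1}^N \delta_{x_i} - \lambda_1$. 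Taking absolute values yields $|R_N(f)| \le \sup_u|g(u)|\cdot V(f) = D_N^*\,V(f)$, which is the claim for $s=1$.

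For general $s$ I would iterate this identity one coordinate at a time: apply the one-dimensional computation in the variable $x_1$, then in $x_2$, and so on. Keeping track of all boundary contributions, one arrives at a multidimensional identity of the form
$$R_N(f) \;=\; \sum_{\emptyset \neq v \subseteq \{1,\ldots,s\}} (-1)^{|v|}\int_{[0,1]^{|v|}} \Delta_v(t_v)\, \mathrm{d}^{(|v|)} f_v(t_v),$$
where $f_v$ denotes the restriction of $f$ to the face $\{x_j = 1 : j\notin v\}$, $\mathrm{d}^{(|v|)} f_v$ its mixed (Stieltjes) differential, and
$$\Delta_v(t_v) := \frac1N \sum_{i=1}^N \prod_{j \in v}\mathbf{1}_{[0,t_j)}(x_{i,j}) - \prod_{j\in v} t_j$$
is the local discrepancy attached to that face. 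The key observation is that $\Delta_v(t_v)$ is precisely the discrepancy of the anchored box $\prod_{j\in v}[0,t_j)\times \prod_{j\notin v}[0,1)$, hence $|\Delta_v(t_v)| \le D_N^*$ uniformly in $t_v$ and in $v$ — no corners of the cube other than the origin ever enter. Bounding each summand by $D_N^*$ times the Vitali variation of $f_v$ on $[0,1]^{|v|}$ and summing, one obtains $|R_N(f)| \le D_N^* \sum_{\emptyset\neq v} V^{(|v|)}(f_v)$, and this last sum is, by definition, the variation of $f$ in the sense of Hardy and Krause, $V(f)$.

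I expect the main obstacle to be organizational rather than conceptual: the iterated integration by parts must be set up so that all boundary terms are collected correctly into the $2^s-1$ face contributions above, and one must reconcile the resulting expression with the precise definition of the Hardy--Krause variation as the sum of the Vitali variations of the restrictions of $f$ to the upper faces of the cube. A secondary technical point is the passage from smooth $f$ to general $f$ of bounded variation in the sense of Hardy and Krause; here I would either approximate $f$ uniformly by smooth functions with uniformly bounded variation, or, more cleanly, carry out the whole argument with Lebesgue--Stieltjes differentials $\mathrm{d}^{(|v|)} f_v$ from the start — legitimate exactly because finiteness of the Hardy--Krause variation guarantees that these mixed Stieltjes measures exist and are finite.
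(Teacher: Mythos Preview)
The paper does not actually prove this theorem: it is quoted as a classical result, with the reader referred to \cite{KN74} for details, and is then used as a black box in the proof of Theorem~\ref{thm:main_thm}. So there is no ``paper's own proof'' to compare against.

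That said, your sketch is correct and is precisely the classical argument (originating with Hlawka, later sharpened into the Hlawka--Zaremba identity). The one-dimensional step via Riemann--Stieltjes integration by parts against the local discrepancy function $g$, the iteration over coordinates producing the sum over nonempty faces $v\subseteq\{1,\ldots,s\}$, the identification of each $|\Delta_v|$ with the discrepancy of an anchored box (hence $\le D_N^*$), and the recognition of $\sum_v V^{(|v|)}(f_v)$ as the Hardy--Krause variation are all standard and correctly stated. Your two anticipated difficulties are the real ones: the bookkeeping of boundary terms in the iterated integration by parts is where most expositions spend their effort, and the extension from smooth $f$ to general $f$ of bounded Hardy--Krause variation is cleanest done, as you say, by working with the Stieltjes differentials throughout rather than by approximation. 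One small point to watch: the boundary term $[g(u)f(u)]_0^1$ vanishes because $g(0)=g(1)=0$, which uses that the sample points lie in $[0,1)^s$; the paper's statement allows $[0,1]^s$, but its definition of $D_N^*$ is over $[0,1)^s$, so this is only a cosmetic mismatch.
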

If all partial mixed derivatives of $f$ are continuous on $[0,1]^s$ then $V(f)$ can be expressed as $\sum_{u} \int_0^1 \left| \tfrac{\partial^{|u|}f}{\partial x_u}(x_u,1) \right| \mathrm{d}x_n,$ where the sum is taken over all subsets $u \subset \left\{ 1,\ldots,s\right\}$ and $(x_u,1)$ is the vector whose $i$-th component is $x_i$ if $i \in u$ and $1$ otherwise, see \cite{KN74}, Chapter~2. In contrast, a typical Monte Carlo approach would have a \textit{probabilistic} convergence rate of $\sqrt{N}$ which is much worse than the \textit{deterministic} convergence of $N^{-1} (\log N)^{s}$ implied by the Koksma-Hlawka inequality for a low-discrepancy sequence.\\[12pt]
In the LSMC context, the actual function $f$ of the BEL is indeed unknown because the cashflow projection used to calculate BEL is a complicated software which incorporates the complex interaction of liabilities, assets and management actions, compare e.g. \cite{Dof14}. Hence, whenever it comes to integration problems involving $f$, it is essential to control the star-discrepancy of $x_1,\ldots,x_N$ and use low-discrepancy sequences.

\section{Regression design}

\paragraph{Calculation of $X^TX$.} We have argued that the main numerical challenge in the implementation of \eqref{eq:estimator} lies in the inversion of $X^TX$ because the number of rows of $X$ is huge (equal to the number of outer scenarios). We now calculate the entries of $X^TX$.
Let  $\varphi_1(x_1,\ldots,x_s), \varphi_2(x_1,\ldots,x_s),\ldots$ be an arbitrary (multi-dimensional) Hilbert basis of $L^2([0,1]^s)$. The regression yields a projection to some $m$-dimensional subspace of $L^2([0,1]^s)$ and it may without loss of generality be assumed that its basis (as vector space) is given by $\varphi_1(x_1,\ldots,x_s),\ldots,\varphi_{m}(x_1,\ldots,x_s)$. Furthermore let $t^1=(t^1_1,\ldots,t^1_s),t^2=(t^2_1,\ldots,t^2_s),\ldots,t^N=(t^N_1,\ldots,t^N_s)$ be the $s$-dimensional low-discrepancy sequence used as risk driver realizations for the outer scenarios. Then
$$X_{ij} = \varphi_{j}(t^i)$$
and hence
$$(X^TX)_{ij} = \sum_{k=1}^N \varphi_i(t^k) \varphi_j(t^k).$$

\paragraph{The main result.} Before we can formulate our main result, we need to define the expression 
$$V(\varphi)_{\max} := \max \left\{ V(\varphi_i \varphi_j) \ | \ 1 \leq i,j \leq n \right\}$$  to be the maximal Hardy-Krause variation of the pairs of basis functions appearing in $X^TX$. 
\begin{thm} \label{thm:main_thm} Let $t_1,t_2,\ldots$ be an $s$-dimensional low-discrepancy sequence with $D_N^*(Z) \leq C \tfrac{(\log N)^{s}}{N}$ for all $N \in \NN$ and let $\varphi_1,\varphi_2,\ldots,\varphi_m$ be an orthonormal basis of the $m$-dimensional subspace on which the regression projects. Furthermore, let $\theta > 1$ be arbitrary and let $N$ be such that
	$$\frac{(\log N)^{s}}{N} \leq \frac{\theta-1}{(1+\theta)CV(\varphi)m}$$
then the condition number satisfies
\begin{align} \label{eq:main}
\kappa(\tfrac{1}{N}X^TX) \leq \theta.
\end{align}
\end{thm}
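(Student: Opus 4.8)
The plan is to show that $A := \tfrac{1}{N}X^TX$ is an entrywise small perturbation of the $m\times m$ identity matrix and then to localize its spectrum via the Gershgorin Circle Theorem. First I would combine the formula $(X^TX)_{ij} = \sum_{k=1}^N \varphi_i(t^k)\varphi_j(t^k)$ derived above with the Koksma--Hlawka inequality (Theorem~\ref{thm:Koksma_Hlawka}) applied to the product function $\varphi_i\varphi_j$, whose Hardy--Krause variation is at most $V(\varphi)_{\max}$ (the quantity written $V(\varphi)$ in the statement). Since the $\varphi_i$ are orthonormal, $\int_{[0,1]^s}\varphi_i(x)\varphi_j(x)\,\mathrm{d}x = \delta_{ij}$, and therefore
\[
\left| A_{ij} - \delta_{ij} \right| \;\leq\; V(\varphi)_{\max}\, D_N^*(t_1,\ldots,t_N) \;\leq\; C\,V(\varphi)_{\max}\,\frac{(\log N)^{s}}{N} \;=:\; \rho_N
\]
for all $1\leq i,j\leq m$. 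Hence the diagonal entries of $A$ lie in $[1-\rho_N,\,1+\rho_N]$ and the off-diagonal entries in $[-\rho_N,\,\rho_N]$.

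Next I would apply Theorem~\ref{thm:Gershgorin}: the $j$-th Gershgorin disc is centered at $A_{jj}$ with radius $\sum_{k\neq j}|A_{jk}|\leq (m-1)\rho_N$, so every eigenvalue $\lambda$ of $A$ satisfies, for some $j$,
\[
|\lambda - 1| \;\leq\; |\lambda - A_{jj}| + |A_{jj}-1| \;\leq\; (m-1)\rho_N + \rho_N \;=\; m\rho_N.
\]
The hypothesis on $N$ is precisely $\rho_N \leq \tfrac{\theta-1}{(1+\theta)m}$; since $\theta>1$ this gives $m\rho_N \leq \tfrac{\theta-1}{\theta+1} < 1$, so $A$ --- being symmetric, hence normal, with real spectrum --- is positive definite and its eigenvalues all lie in $[1-m\rho_N,\,1+m\rho_N]$ with $1-m\rho_N>0$.

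Finally, formula~\eqref{eq:kappa2} applies to the normal matrix $A$, giving
\[
\kappa(A) \;=\; \frac{\lambda_{\max}(A)}{\lambda_{\min}(A)} \;\leq\; \frac{1+m\rho_N}{1-m\rho_N},
\]
and the bound $\tfrac{1+m\rho_N}{1-m\rho_N}\leq\theta$ is equivalent (multiplying through by the positive quantity $1-m\rho_N$) to $m\rho_N(1+\theta)\leq \theta-1$, i.e. to the rearranged hypothesis; this establishes \eqref{eq:main}. I do not anticipate a genuine obstacle here: the only points needing care are that $V(\varphi_i\varphi_j)<\infty$ so that Koksma--Hlawka is legitimate (automatic for polynomial bases and implicit in the very definition of $V(\varphi)_{\max}$) and that $m\rho_N<1$, so that the eigenvalue interval stays bounded away from $0$ and the quotient in \eqref{eq:kappa2} is meaningful --- both guaranteed by $\theta>1$. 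The conceptual content is simply that a low-discrepancy sequence forces the empirical Gram matrix to converge entrywise, at the explicit Koksma--Hlawka rate, to the true Gram matrix of the $\varphi_i$, which by orthonormality is the identity.
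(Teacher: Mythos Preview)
Your proposal is correct and follows essentially the same route as the paper: Koksma--Hlawka plus orthonormality to get $|A_{ij}-\delta_{ij}|\le \rho_N$, Gershgorin to localize the spectrum in $[1-m\rho_N,1+m\rho_N]$, and then the condition-number ratio bounded by $\tfrac{1+m\rho_N}{1-m\rho_N}\le\theta$. If anything, your Gershgorin step is slightly more explicit than the paper's, spelling out the triangle inequality $|\lambda-1|\le|\lambda-A_{jj}|+|A_{jj}-1|$ rather than asserting the bound $m\rho_N$ directly.
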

Thus, if the number of outer scenarios is large, there exists an explicit upper bound for the condition number. It will follow from the proof that $\kappa(\tfrac{1}{N}X^TX)$ converges to $1$ for any uniformly distributed sequence since the latter is equivalent to $D_N^*(Z) \to 0$ for $N \to \infty$. However, the explicit bound for $N$ is only true for low-discrepancy sequences and is best possible if the answer to the great open problem of discrepancy theory is true, i.e. the best possible rate of convergence $D_N^*(Z) \to 0$ is $N^{-1} (\log N)^{s}$.
\begin{proof}
The Koksma-Hlawka inequality, Theorem~\ref{thm:Koksma_Hlawka}, and the fact that $(t_i)_{i \in \NN}$ is a low-discrepancy sequence imply that
$$ \left| (\tfrac{1}{N}X^TX)_{ij} - \int_{[0,1]^s} \varphi_i(x) \varphi_j(x) \mathrm{d} x \right| \leq V(\varphi)_{\max} C \frac{(\log N)^{s}}{N},$$
where $C$ is a constant independent of $N$. Since the basis is orthonormal, we obtain
$$\left| (\tfrac{1}{N}X^TX)_{ij} - \delta_{ij} \right| \leq V(\varphi)_{\max} C \frac{(\log N)^{s}}{N},$$
where $\delta_{ij}$ denotes Kronecker delta. This means that $\frac{1}{N}X^TX$ converges to the identity matrix for $N \to \infty$. Finally, it can be deduced from Gershgorin's Theorem~\ref{thm:Gershgorin} that for each eigenvalue $\lambda(N)$ of $\frac{1}{N}X^TX$ we have
\begin{align} \label{eq:bound:eigen}
|\lambda(N) - 1| \leq V(\varphi)_{\max} C \frac{(\log N)^{s}}{N} m.
\end{align}
Now let $0<r<1$ be arbitrary and choose $N$ large enough such that
\begin{align} \label{eq:bound:N}
\frac{(\log N)^{s}}{N} < \frac{r}{C V(\varphi)_{\max} m}.
\end{align}
Then it follows from \eqref{eq:bound:eigen} and \eqref{eq:bound:N} that
\begin{align*}
|\lambda(N) - 1| \leq V(\varphi)_{\max} C \frac{(\log N)^{s}}{N} m < r,
\end{align*}
i.e. all eigenvalues lie in the interval $(1-r,1+r)$. Finally, by \eqref{eq:kappa2} we get
\begin{align} \label{eq:kappa}
\kappa(\tfrac{1}{N}X^TX) \leq \frac{1+r}{1-r}.
\end{align}
If $\tfrac{1+r}{1-r} \leq \theta$ or in other words $r \leq \tfrac{\theta-1}{\theta+1}$, the claim follows.
\end{proof}

\paragraph{Summary and numerical results.} We have just derived a bound for the number of outer scenarios necessary to guarantee numerical stability of the LSMC regression model. Ceteris paribus, the expression $\kappa(\tfrac{1}{N}X^TX) - 1$ is supposed to be smaller than $N^{-1}(\log N)^s$ times some constant. This was confirmed by our numerical calculation in dimension $1$ using MATLAB: for the van der Corput sequence in base $2$ and shifted Legendre polynomials up to degree $2$, i.e. $m=3$, Figure~1 shows that the quotient of $\kappa(\tfrac{1}{n}X^TX) - 1$ by $N^{-1}(\log N)^s$ is clearly bounded as is predicted by Theorem~\ref{thm:main_thm} although there is some variance in the expression.
\begin{center}
	\includegraphics[scale = 0.57]{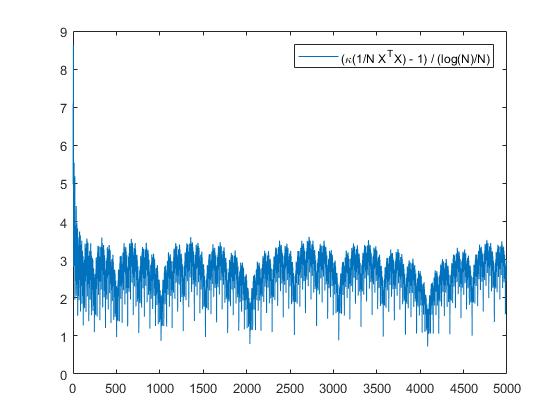}\\
	Figure 1. Quotient of $\kappa(\tfrac{1}{n}X^TX) - 1$ by $N^{-1}(\log N)^s$ for $N=10,\ldots,5000$.\\[12pt]
\end{center}
Moreover, the number of outer scenarios $N$ has to be chosen according to \eqref{eq:bound:N} in order to achieve a desired numerical precision of the LSMC calculation. It depends on four different variables:
\begin{itemize}
	\item the dimension of the (polynomial) subspace $m$: \cite{BH15} argued that a large dimension of the subspace slows down the speed of convergence. Similarly, we see that it has also a negative influence on numerical stability. Nevertheless, there may be external needs, e.g. a complex interaction of the different risk drivers, which demand for a certain cardinality of the basis.
	\item the Hardy-Krause variation of the chosen orthonormal basis $V(\varphi)_{\max}$: this shows that the choice of the orthonormal basis, which can be made by the user, has a significant influence on the numerical stability of the regression problem.
\end{itemize}
Note that there is an interaction of $m$ on $V(\varphi)_{\max}$ since considering an additional basis element might (and usually will) increase $V(\varphi)_{\max}$. We also did a numerical calculation confirming this observation (van der Corput sequence in base $2$, Legendre polynomials with maximal degree $m-1$ and $N=200$). If we only look at the quotient of $\kappa(\tfrac{1}{N}X^TX)$ by $m$, Figure~2 shows that it is clearly not bounded. Nevertheless, Figure~3 indicates that the maximal Hardy-Krause variation $V(\varphi)_{\max}$ grows faster than this quotient which is again consistent with the theoretical prediction. 
\begin{center}
	\includegraphics[scale = 0.57]{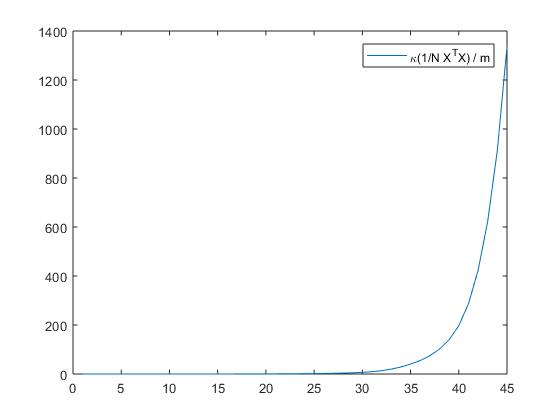}\\
	Figure 2. Quotient of $\kappa(\tfrac{1}{n}X^TX) - 1$ by $N^{-1}(\log N)^s$ for $m=1,\ldots,45$.\\[12pt]
\end{center}
\begin{center}
	\includegraphics[scale = 0.57]{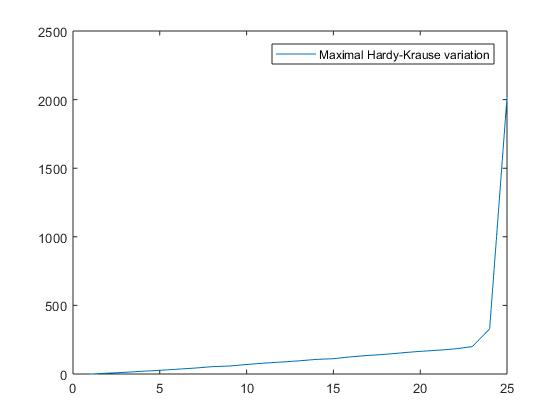}\\
	Figure 3. $V(\varphi)_{\max}$  for $m=1,\ldots,25$.\\[12pt]
\end{center}
We only calculated $V(\varphi)_{\max}$ up to $m=25$ here because it is numerically very challenging to give precise values and $V(\varphi)_{\max}$ grows very fast; it is already $>10^{18}$ for $m=30$. Another factor with an impact on the necessary number of simulations is: 
\begin{itemize} 
	\item the convergence constant of the low-discrepancy sequence $C$: it is a well-known phenomenon that the speed of convergence of different low-discrepancy sequences to uniform distribution differs, see e.g. \cite{Nie92}, Theorem~3.6.
\end{itemize} 
Figure~4 shows that the speed of convergence of $\kappa(\tfrac{1}{n}X^TX) - 1 \to 0$ for $N \to \infty$ depends on the chosen low-discrepancy sequence (shifted Legendre polynomials up to degree $2$, i.e. $m=3$, and van der Corput sequence in base $b$). Theoretically, it is expected that $C$ grows with increasing $b$, see again \cite{Nie92}, Theorem 3.5. This behavior is reflected by Figure~4. 
\begin{center}
	\includegraphics[scale = 0.57]{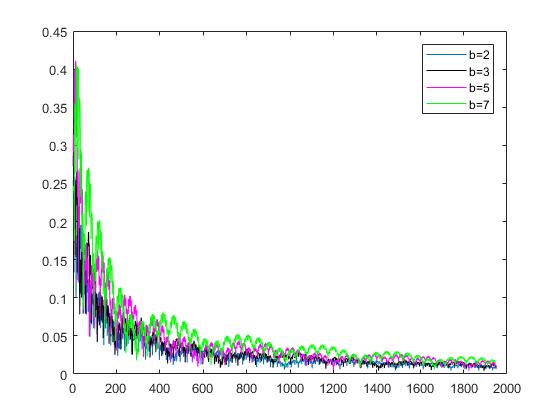}\\
	Figure 4. Convergence of $\kappa(\tfrac{1}{n}X^TX) - 1$ for different bases $b$.\\[12pt]
\end{center}
Much more delicate than choosing a good one-dimensional low-discrepancy sequence (e.g. by adjusting the base of the van der Corput sequence) is the dependence of $C$ on the dimension of the fitting space $s$, compare again e.g. \cite{Nie92}, Theorem~3.6. Therefore, a wise choice of the (Sobol) sequence used to fill the fitting space $[0,1]^s$ is of high importance for numerical stability. This leads us to the variable with the most complex interaction with the condition number:
\begin{itemize}
	\item the dimension of the fitting space $s$: it is governed by the number of risk drivers which were identified to be relevant for the insurance company. Therefore, it is given externally and cannot be influenced by a smart design of the regression algorithm. 
\end{itemize}
On the one hand, the dimension $s$ has a direct impact on the necessary number of simulations by the power of $(\log N)^s$. As $(\log N)$ grows relatively slowly, this effect is less important than the indirect implications of $s$: as we have just discussed the convergence constant of the low-discrepancy sequence $C$ grows with $s$. From a practical point of view, it also would not make sense to keep the number of basis functions $m$ constant when increasing $s$ since every relevant risk driver should be reflected by the regression algorithm. This finally also implies a possible increase of $V(\varphi)_{\max}$.\\[12pt]
Summarized, Theorem~\ref{thm:main_thm} shows that a sound design of an LSMC regression algorithm should at least incorporate orthonormal polynomials and low-discrepancy sequences. We leave it as a problem for future research to underpin our theoretical observation by further explicit numerical calculations and, by that, to find an optimal combination of the involved variables. 

\paragraph{Acknowledgment.} The first-named author conducted parts of the work on this paper during a stay at the Fields Institute whom he would like to thank for hospitality. Both authors would like to thank the referee for his valuable comments.

Christian Wei\ss\\
\textsc{Hochschule Ruhr West, Duisburger Str. 100, D-45479 M\"ulheim an der Ruhr, Germany}\\
\textit{E-mail address:} \texttt{christian.weiss@hs-ruhrwest.de}\\[12pt]
Zoran Nikoli\'{c}\\
\textsc{Mathematical Institute, University Cologne, Weyertal 86-90, 50931 Cologne, Germany}\\
\textit{E-mail address:} \texttt{znikolic@uni-koeln.de}

\end{document}